\def\BState{\State\hskip-\ALG@thistlm}
\newcommand{\species}{{\mathcal{S}}} 
\newcommand{\genes}{{\mathcal{G}}}
\newcommand{\baddups}{{\mathcal{D}}}
\newcommand{\proteins}{{\mathcal{P}}}
\newcommand{\leafset}{{\mathcal{L}}}
\newcommand{\vertices}{{\mathcal{V}}}
\newtheorem{corollary}{Corollary}
\newtheorem{lemma}{Lemma}
\newtheorem{theorem}{Theorem}
\newtheorem{proposition}{Proposition}
\newenvironment{proof}[1][Proof]{\textbf{#1.} }{\ \rule{0.5em}{0.5em}}
\newtheorem{remark}{Remark}
\date{}
\titlespacing{\section}{0pt}{*2.0}{*2.0}
\titlespacing{\subsection}{0pt}{*1.8}{*1.8}
\begin{document}

\title{\Large\textbf{Reconstructing Protein and Gene Phylogenies by Extending the Framework of Reconciliation}\normalsize}

\author{
	Esaie Kuitche$^*$, Manuel Lafond$^{\dag}$ and A\"ida Ouangraoua$^*$\\\\
	$^*$Department of Computer Science, Universit\'e de Sherbrooke\\  
	Sherbrooke, QC, J1K2R1, Canada\\
    (esaie.kuitche.kamela,aida.ouangraoua)@USherbrooke.ca
	\and
	\\    
	$^{\dag}$Department of Mathematics and Statistics, University of Ottawa\\
	Ottawa, ON,  K1N6N5, Canada\\
    mlafond2@UOttawa.ca
}
\maketitle 

\thispagestyle{empty}

\begin{center}
\large\textbf{Abstract}
\end{center}

The architecture of eukaryotic coding genes allows the production of several different protein isoforms by genes. Current gene phylogeny reconstruction methods make use of a single protein product per gene, ignoring information on alternative protein isoforms. These methods often lead to inaccurate gene tree reconstructions that require to be corrected before being used in phylogenetic tree reconciliation analyses or gene products phylogeny reconstructions. Here, we propose a new approach for the reconstruction of accurate gene trees and protein trees accounting for the production of alternative protein isoforms by the genes of a gene family. We extend the concept of reconciliation to protein trees, and we define a new reconciliation problem called \textsc{MinDRGT} that consists in finding a gene tree that minimizes a double reconciliation cost with a given protein tree and a given species tree. We define a second problem called \textsc{MinDRPGT} that consists in finding a protein tree and a gene tree minimizing a double reconciliation cost, given a species tree and a set of protein subtrees. We provide algorithmic exact and heuristic solutions for some versions of the problems, and we present the results of an application to the correction of gene trees from the Ensembl database.
An implementation of the heuristic method is available at https://github.com/UdeS-CoBIUS/Protein2GeneTree.
\vspace{2mm}

\medskip
\noindent
\textbf{keywords:} Protein Tree, Gene Tree, Species Tree, Reconciliation

\section{Introduction} 
\label{sec:intro}
Recent genome analyses have revealed the ability of eukaryotic coding genes to produce several transcripts and proteins isoforms.
This mechanism plays a major role in the functional diversification of genes \cite{keren2010,nilsen2010}. Still, current gene phylogeny reconstruction methods make use of a single protein product per gene that is usually the longest protein called the "reference protein", ignoring the production of alternative protein isoforms \cite{aakerborg2009,rasmussen2011,vilella2009}. It has been shown that these sequence-based methods often return incorrect gene trees \cite{hahn2007,rasmussen2011}. Thus, several methods have been proposed for the correction of gene trees
\cite{noutahi2016,wu2013}. 
Recently, a few models and algorithms aimed at  reconstructing the evolution of full sets of gene products along gene trees were introduced 
\cite{christinat2012,zambelli2010}. Some models have also been proposed to study the evolution of alternative splicing and gene exon-intron structures along gene trees \cite{irimia2007,keren2010}. All these models require the input of accurate gene trees and are biased when the input gene trees contain errors.  

Here, we explore a new approach in order to directly reconstruct accurate gene phylogenies and protein phylogenies while accounting for the production of alternative protein isoforms by genes. We introduce new models and algorithms for the reconstruction of gene phylogenies and full sets of proteins phylogenies using reconciliation \cite{eulenstein2010}. We present a model of protein evolution along a gene tree that involves two types of evolutionary event called \emph{protein creation} and \emph{protein loss}, in addition to the classical evolutionary events of speciation, gene duplication and gene loss considered in gene-species tree reconciliation. We propose an extension of the framework of gene-species tree reconciliation in order to define the concept of protein-gene tree reconciliation, and we introduce new reconciliation problems aimed at reconstructing optimal gene trees and proteins trees. First, we define the problem of finding a gene tree minimizing the sum of the protein-gene and gene-species reconciliation costs, given the protein tree and the species tree. We call this problem the \emph{Minimum Double Reconciliation Gene Tree} (\textsc{MinDRGT}) problem.  Second, we define the problem of jointly finding a protein tree and a gene tree minimizing the sum of the protein-gene and gene-species reconciliation costs, given the species tree and a set of subtrees of the protein tree to be found. We call this problem the \emph{Minimum Double Reconciliation Protein and Gene Tree} (\textsc{MinDRPGT}) problem.
 
In this paper, we first formally define, in Section \ref{sec:new_def}, the new protein evolutionary models and the related reconciliation problems, \textsc{MinDRGT} and \textsc{MinDRPGT}, for the reconstruction of gene phylogenies and full sets of protein phylogenies. In Section \ref{sec:MinDRGT_hardness}, we prove the NP-hardness of some versions of \textsc{MinDRGT}, especially the one called  \textsc{MinDRGT}$_{CD}$ that consists in minimizing the number of protein creation and gene duplication events. Next, in Section \ref{sec:MinDRGT_P=G}, we consider the \textsc{MinDRGT} problem in a special case where each gene is associated to a single protein. This restriction is relevant for the correction of gene trees output by sequence-based gene phylogeny reconstruction methods using a single protein per gene. Such methods make the unsupported assumption that each pair of leaf proteins in the protein tree is related through a least common ancestral node that corresponds to a speciation or a gene duplication event, and then, they output a gene tree equivalent to the reconstructed protein tree. In this perspective, the \textsc{MinDRGT} problem under the restriction that each gene is associated to a single protein,
allows pairs of proteins to be related through ancestral protein creation events, and then asks to find an optimal gene tree, possibly different from the input protein tree. In other terms, the protein tree is not confused with the gene tree, but it is used, together with the species tree, to guide the reconstruction of the gene tree.
We first show that, even with the restriction that each gene is associated to a single protein, for most versions the \textsc{MinDRGT} problem, the optimal gene tree may differ from the input protein tree.
We then exhibit a heuristic algorithm for the \textsc{MinDRGT} problem that consists in building the optimal gene tree by applying modifications on the input protein tree guided by the species tree. 

In Section \ref{sec:MinDRPGT}, we consider the \textsc{MinDRPGT} problem aimed at jointly reconstructing both a protein phylogeny and a gene phylogeny. We consider a restriction on the input data that requires the set of input protein subtrees to be the set of all inclusion-wise maximum subtrees of the target (real) protein tree $P$ that contain no protein creation node. Such an input can be obtained or approximated by using a soft-clustering approach in order to group proteins into clusters of orthologous proteins. Under this restriction, we present a polynomial exact algorithm for reducing \textsc{MinDRPGT} to a special case of \textsc{MinDRGT}
where the input protein tree $P$ is given with a partial labeling of its nodes. The algorithm consists in first reconstructing complete subtrees of $P$ and then combining them into $P$.

In Section \ref{sec:Appli}, the results of applying the heuristic algorithm for the 
correction of gene trees from the Ensembl database \cite{hubbard2002} 
show that the new framework allows to reconstruct gene trees whose 
double reconciliation costs are decreased, as compared to the initial Ensembl gene trees \cite{vilella2009}. 

 \section{Preliminaries: protein trees, gene trees and species trees}
\label{sec:prelim}

In this section, we introduce some preliminary notations:
$\species$ denotes a set of species, $\genes$ a set
of genes representing a gene family, and $\proteins$ a set of proteins
produced by the genes of the gene family. The three sets are accompanied with a
mapping function $s : \genes \rightarrow \species$ mapping each gene to its
corresponding species, and a mapping function
$g : \proteins \rightarrow \genes$ mapping each protein to
its corresponding gene. In the sequel, we assume that $\species$, $\genes$ and $\proteins$ satisfy $\{s(x) : x \in \genes \} = \species$  and
$\{g(x) : x \in \proteins \} = \genes$, without explicitly 
mentioning it.\\

\noindent {\bf Phylogenetic trees:} 
A tree $T$ for a set $L$ is a rooted binary tree whose leafset is $L$.
The leafset of a tree $T$ is denoted by $\leafset(T)$ and the set of node of $T$ is denoted by $\mathcal{V}(T)$. Given a node $x$
of $T$, the complete subtree of $T$ rooted at $x$ is denoted by $T[x]$. The
\emph{lowest common ancestor} (lca) in $T$ of a subset $L'$ of $\leafset(T)$,
denoted by $lca_T(L')$, is the ancestor common to all nodes in $L'$ that is
the most distant from the root of $T$. $T|_{L'}$ denotes the tree for
$L'$ obtained from the subtree $T'$ of $T$ induced by the subset of leaves $L'$ by removing all internal nodes of degree 2, except $lca_T(L')$, which is the root of $T'$.
Given an internal node $x$ of $T$,
the children of  $x$ are arbitrarily denoted by $x_l$ and $x_r$.\\

\noindent {\bf Proteins, genes, and species trees:} In the sequel, $S$ denotes a species tree for the set  $\mathcal{S}$, $G$ denotes a gene tree for the set $\mathcal{G}$, and  $P$ denotes a protein tree for the set $\mathcal{P}$.
The mapping function $s$  is extended to be defined from $\mathcal{V}(G)$ to $\mathcal{V}(S)$ such that if $x$ is an internal node of $G$, then $s(x) = lca_S(\{s(x'): x' \in \mathcal{L}(G[x])\})$, i.e. the image of a node $x\in \mathcal{V}(G)$ in $\mathcal{V}(S)$  is the lca in the tree $S$ of all the images of the leaves of $G[x]$ by $s$. 
Similarly, the mapping function  $g$ is extended to be defined from $\mathcal{V}(P)$ to $\mathcal{V}(G)$ such that if $x$ is an internal node of  $P$, then $g(x) = lca_G(\{g(x'): x' \in \mathcal{L}(P[x])\})$.\\

\noindent {\bf Gene-species tree reconciliation:} 
Each internal node of the species tree $S$ represents an ancestral species at the moment of a speciation
event (\emph{Spec}) in the evolutionary history of $\species$. 
The gene tree $G$ represents the evolutionary history of the genes of the gene family $\genes$, and
each internal node of $G$ represents an ancestral gene at the moment of a \emph{Spec} or a gene duplication event  (\emph{Dup}).

The \emph{LCA-reconciliation} of $G$ with $S$ is a labeling function $l_G$ from  $\vertices(G)-\leafset(G)$ to $\{Spec, Dup\}$ such that the label of
an internal node $x$ of $G$ is $l_G(x) = Spec$ if $s(x) \neq s(x_l)$ and
$s(x) \neq s(x_r)$, and $l_G(x) = Dup$ otherwise 
\cite{eulenstein2010,ma2000}. 
The LCA-reconciliation induces gene loss events on edges of $G$ as follows: given an edge $(x,y)$ of the tree $G$ such that $y = x_l$ or $y = x_r$, a gene loss event is induced on $(x,y)$ for each node located on the path between $s(x)$ and $s(y)$ in $S$ (excluding $s(x)$ and $s(y)$). If $l_G(x) = Dup$ and $s(x) \neq s(y)$, an additional loss event preceding all other loss events is induced on $(x,y)$ for $s(x)$.
Figure \ref{gene-species-reconciliation} presents a gene tree $G$ on a gene family $\mathcal{G} =\{a_2,a_3,b_0,b_1,b_2,b_3,c_1,c_2,c_3,d_3\}$ reconciled with a species tree $S$ on a set of species $\mathcal{S} =\{a,b,c,d\}$.

    \begin{figure}[!ht]
      \centering
      \includegraphics[width = 0.45\textwidth]{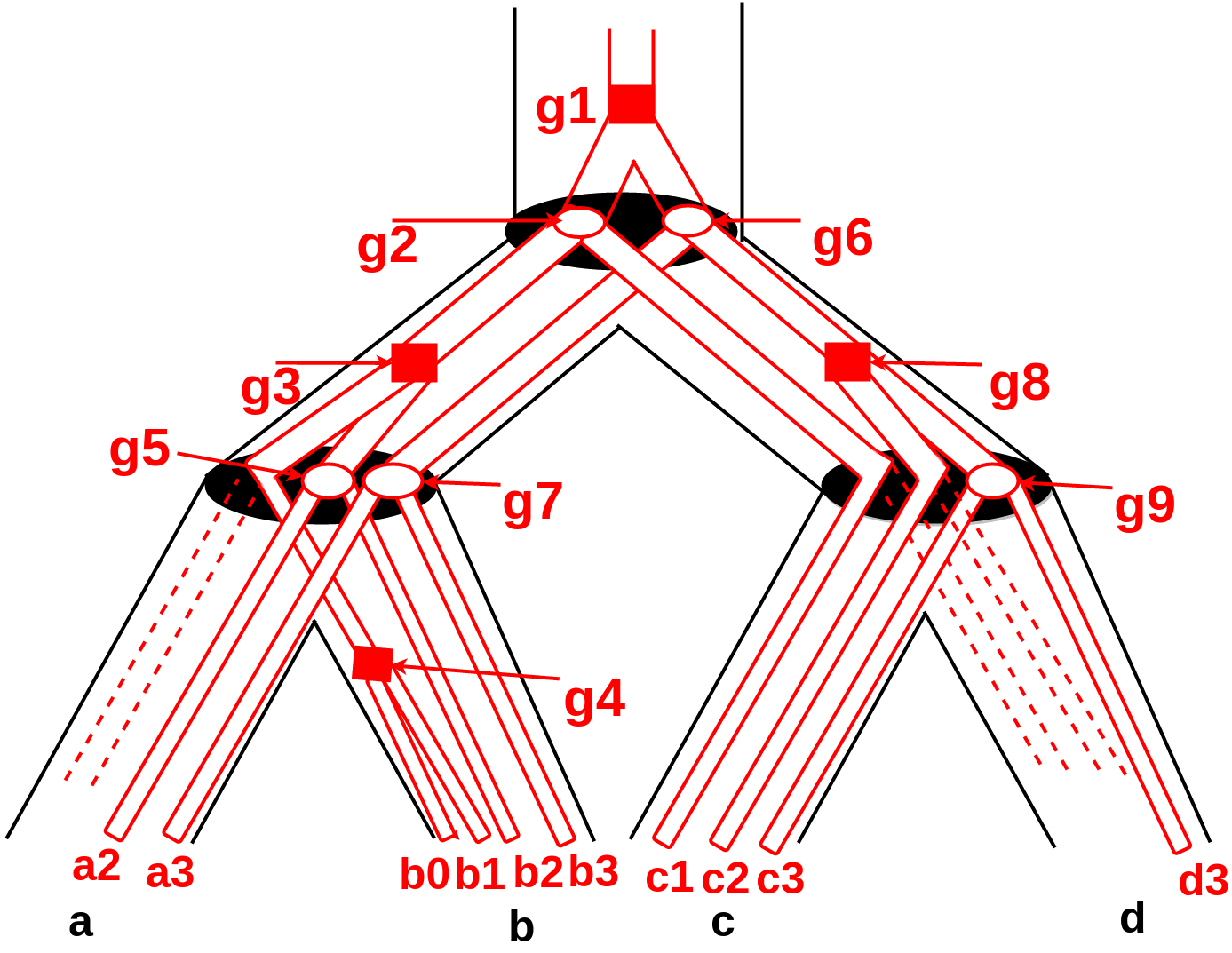}
      \caption{
      A gene tree $G$ on a gene family $\mathcal{G} =\{a_2,a_3,b_0,b_1,b_2,b_3,c_1,c_2,c_3,d_3\}$ such that
      $s(x_i) = x$ for any gene $x_i \in \genes$ and species $x \in \species=\{a,b,c,d\}$. The species tree $S$ is $((a,b),(c,d))$.
    $G$ is reconciled with $S$: a speciation node $x$ of $G$ is located inside the node $l_G(x)$ of $S$, and a duplication node $x$ is located on the edge $(p,l_G(x))$ of $S$ such that $p$ is the parent of $l_G(x)$ in $S$. The gene tree $G$ contains $9$ ancestral nodes: $g_1,g_3,g_4,g_8$ are duplications represented as squared nodes and $g_2,g_5,g_6,g_7,g_9$ are
      speciations represented as circular nodes. $G$ contains $3$ loss events whose locations are indicated with dashed edges. The same labeled gene tree $G$ is represented in Figure  \ref{protein-gene-reconciliation} (Top),  not embedded in $S$.}
      \label{gene-species-reconciliation}
    \end{figure}

The LCA-reconciliation $l_G$ induces the definition of three possible costs of reconciliation $C_{G\rightarrow S}$ between $G$ and $S$. The \emph{duplication cost} denoted by $D(G,S)$ is the number of nodes $x$ of $G$ such that $l_G(x) = Dup$.
The \emph{loss cost} denoted by $L(G,S)$  is the overall number of loss events induced by $l_G$ on edges of $G$.
The \emph{mutation cost} denoted by $M(G,S)$ is the sum of the duplication cost and the loss cost induced by $l_G$. In the example depicted in Figure \ref{gene-species-reconciliation}, the duplication cost is $4$, while the loss cost is $3$, and the 
mutation cost is $7$.\\

\vspace{-0.1cm}
\noindent {\bf Homology relations between genes:} Two genes $x$ and $y$ of the set $\mathcal{G}$ are called \emph{orthologs} if $l_G(lca_G(\{x,y\})) = Spec$, and \emph{paralogs} otherwise.  

\section{Model of protein evolution along a gene tree and problem statements} 
\label{sec:new_def} 
In this section, we first formally describe the new model of protein evolution along a gene tree. Next, we describe an extension of the framework of phylogenetic tree reconciliation that makes use of the new model, and we state new optimization problems related to the extended framework.

\noindent {\bf Protein evolutionary model:} 
The protein evolutionary model that we propose is based on the idea that the set of all proteins $\mathcal{P}$ produced by a gene family $\mathcal{G}$ have derived from a set $\mathcal{A}_P$ of common ancestral proteins that were produced by the ancestral gene located at the root of the gene tree $G$. This ancestral set of proteins evolved along the gene tree through different types of evolutionary and modification events including the classical events of speciation, gene duplication and gene loss. In the sequel, we consider that the ancestral set of proteins $\mathcal{A}_P$ is composed of a single ancestral protein that is the root of a tree for the set of proteins $\mathcal{P}$, but all definitions can be directly extended to protein forests, i.e sets of independent proteins trees rooted at multiple ancestral proteins.

A \emph{protein tree} $P$ is a tree for the set of proteins $\proteins$ representing the phylogeny of the proteins 
in $\proteins$.
Each internal node of $P$ represents an ancestral protein at the moment of a Spec, Dup, or a \emph{protein creation event} (\emph{Creat}). A protein creation event represents the appearance of a new protein isoform at a moment of the evolution of a gene family on an edge of the gene tree $G$. This evolutionary model is supported by recent studies on the evolution of gene alternative splicing patterns and inter-species comparison of gene exon-intron structures \cite{irimia2007,keren2010}. In particular, these studies have highlighted that alternative splicing patterns may be gene-specific or shared by groups of homologous genes \cite{barbosa2012,osorio2015}. A protein creation event thus leads to the observation of conserved protein isoforms called \emph{orthologous splicing isoforms} \cite{zambelli2010} in a group of homologous extant genes descending from the ancestral gene that underwent the protein creation event. Based on these observations, the present model of protein evolution allows to describe the evolution of the full set of proteins produced by a gene family along the gene tree of the family. Figure \ref{proteintree} presents an example of labeled protein tree for a set of proteins $\mathcal{P}=\{a21,a31, b01,b02, b11, b21, b31, c11, c12, c21, c31, d31\}$.\\

\begin{figure}[!ht]
        \centering
        \includegraphics[width = 0.45\textwidth]{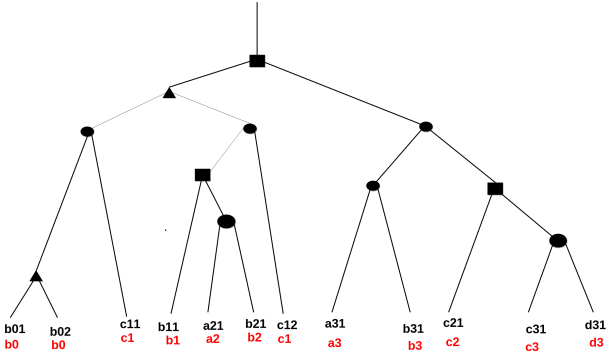}
        \caption{A protein tree $P$ on the set of proteins $\mathcal{P}=\{a21,a31, b01,b02, b11, b21, b31, c11, c12, c21, c31, d31\}$. The nodes of the tree are labeled as speciation (circular nodes), gene duplication (squared nodes), of protein creation events (triangular nodes). For each protein leaf $x_{ij}$ of $P$, the corresponding gene $x_i = g(x_{ij})$ is indicated below the protein. The LCA-reconciliation that resulted in the labeling of the nodes of $P$ is illustrated in Figure  \ref{protein-gene-reconciliation}.
}
        \label{proteintree}
\end{figure}

\noindent {\bf Protein-gene tree reconciliation:} 
We naturally extend the concept of reconciliation to protein trees as follows. The \emph{LCA-reconciliation} of $P$ with $G$ is a labeling function $l_P$
from $\vertices(P)-\leafset(P)$ to $\{Spec, Dup, Creat\}$ that labels an
internal node $x$ of $P$ as  $l_P(x) = Spec$ if $g(x) \neq g(x_l)$ and
$g(x) \neq g(x_r)$ and $l_G(g(x)) = Spec$, else $l_P(x) = Dup$ if
$g(x) \neq g(x_l)$ and $g(x) \neq g(x_r)$ and $l_G(g(x)) = Dup$, and
$l_G(x) = Creat$ otherwise. Note that, if $x$ 
is such that $\{g(y) | y \in \leafset(P[x_l])\} \cap \{g(y) | y \in \leafset(P[x_r])\} \neq \emptyset$, then 
$l_P(x) = Creat$, and $x$ is called an 
\emph{apparent creation} node.

    \begin{figure}[!ht]
      \centering
      \includegraphics[width = 0.42\textwidth]{gene.png}\\
      \includegraphics[width = 0.42\textwidth]{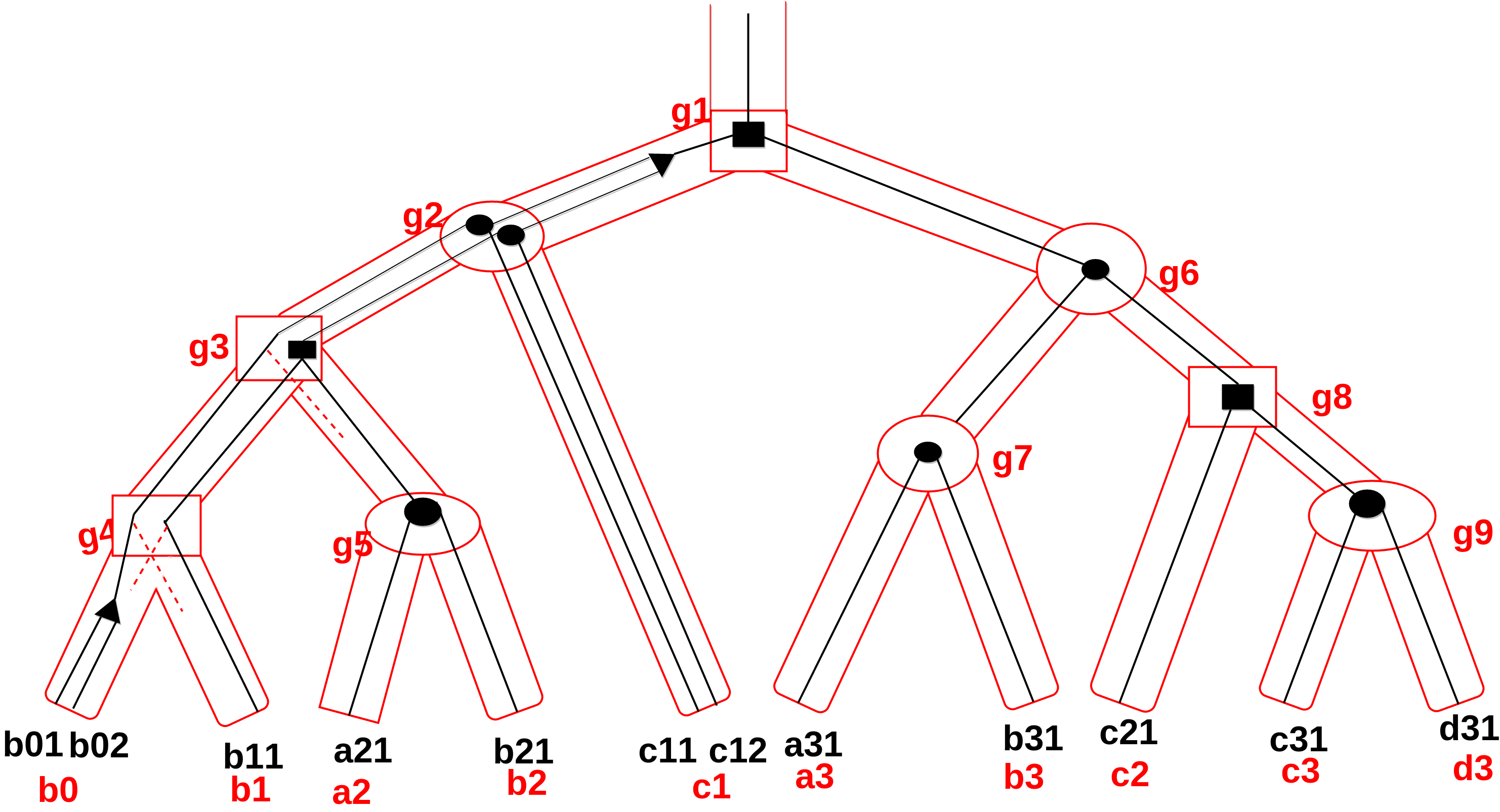}
      \caption{Top. The labeled gene tree $G$ of Figure \ref{gene-species-reconciliation}. Bottom. The protein tree $P$ of Figure \ref{proteintree}  reconciled with $G$.  For each internal node $x$ of $P$, the corresponding image $g(x)$ in $G$ is indicated. The protein tree $P$ contains $2$ protein creation nodes (triangular nodes), $3$ gene duplication nodes (squared nodes), $6$ speciation nodes (circular nodes), and $3$ protein loss events indicated as dashed lines. }
      \label{protein-gene-reconciliation}
    \end{figure}

Similarly to the LCA-reconciliation $l_G$, the LCA-reconciliation $l_P$ induces protein loss events on edges of $P$ as follows: given an edge $(x,y)$ of $P$ such that $y = x_l$ or $y = x_r$, a protein loss event is induced on $(x,y)$ for each node located on the path between $g(x)$ and $g(y)$ in $G$. If $l_P(x) = Creat$ and $g(x) \neq g(y)$, an additional protein loss event preceding all other protein loss events is induced on $(x,y)$ for $g(x)$.
A protein loss event corresponds to the loss of the ability to produce a
protein isoform for an ancestral gene at a moment of the evolution of the
gene family.

We define the following three costs of reconciliation $C_{P\rightarrow G}$
induced by 
the LCA-reconciliation $l_P$ of $P$ with $G$. The \emph{creation cost}
denoted  by $C(P,G)$ is the number of nodes $x$ of $P$ such that
$l_P(x) = Creat$.
The \emph{loss cost} denoted  by $L(P,G)$is the overall number of loss events
induced by $l_P$  on edges of $P$.
The \emph{mutation cost} denoted by $M(P,G)$ is the sum of the creation
cost and the loss cost induced by $l_P$. In the example depicted in Figure
\ref{protein-gene-reconciliation}, the creation cost is $2$, while the loss
cost is $3$, and the mutation cost is $5$.\\

\noindent {\bf Homology relations between proteins:} Based on the LCA-reconciliation of $P$ with $G$, we can define the following homology relations between proteins of the set $\mathcal{P}$. Two proteins $x$ and $y$ of $\mathcal{P}$ are called \emph{orthologs} if $l_P(lca_P(\{x,y\})) \neq Creat$, and in this case, we distinguish two types of orthology relationship: $x$ and $y$ are \emph{ortho-orthologs} if $l_P(lca_P(\{x,y\})) = Spec$, and \emph{para-orthologs} otherwise. Note that if $x$ and $y$ are ortho-orthologs (resp. para-orthologs), the genes $g(x)$ and $g(y)$ are orthologs (resp. paralogs). Finally, $x$ and $y$ are \emph{paralogs} if $l_P(lca_P(\{x,y\})) = Creat$. Given a subset $L'$ of $\mathcal{L}(P)$ such that any pair of proteins $(x,y)\in L'^2$ are orthologs, the tree $P|_{L'}$ induced by $L'$ is called a \emph{creation-free subtree} of $P$.\\

\noindent {\bf Problem statements:} 
 Given a protein tree $P$, a gene tree $G$ and a species tree $S$,
 the \emph{double reconciliation cost} of $G$ with $P$ and $S$ is the sum of
 a cost $C_{P\rightarrow G}$ of reconciliation of $P$ with $G$ and a cost
 $C_{G\rightarrow S}$ of reconciliation  of $G$ with $S$.
 
Depending on the costs of reconciliation $C_{P\rightarrow G}$ considered for $P$ with $G$, and $C_{G\rightarrow S}$ considered for $G$ with $S$, nine types of
 double reconciliation cost can be defined. They are denoted by $XY(P,G,S)$
 where $X$ is either $C$ for $C(P,G)$ or $L$ for $L(P,G)$ or $M$ for $M(P,G)$, and $Y$ is either $D$ for $D(G,S)$ or $L$ for $L(G,S)$ or $M$ for $M(G,S)$.
 For example, $CD(P,G,S)$ considers the creation cost for $C_{P\rightarrow G}$ and the duplication cost for $C_{G\rightarrow S}$.

 The definition of the double reconciliation cost naturally leads to the
 definition of our first reconciliation problem that consists in finding
 an optimal gene tree $G$, given a protein tree $P$ and a species tree $S$.
 \\

\noindent \textsc{Minimum Double Reconciliation Gene Tree Problem (MinDRGT$_{XY}$):}\\
\noindent {\bf Input:} A species tree $S$ for $\species$; 
a protein tree $P$ for  $\proteins$; a gene family $\genes$.\\
\noindent {\bf Output:} A gene tree $G$ for $\genes$ that minimizes the double 
reconciliation cost $XY(P,G,S)$.\\

The problem \textsc{MinDRGT} assumes that the protein tree $P$ is known, but in practice, phylogenetic trees on full set of proteins are not available and the application of sequence-based phylogenetic reconstruction methods for constructing protein trees with more than one protein for some genes is likely to lead to incorrect trees as for the reconstruction of single-protein-per-gene trees \cite{hahn2007,rasmussen2011}. However, 
proteins subtrees of $P$ can be obtained by building phylogenetic trees for sets of orthologous protein isoforms \cite{zambelli2010}. Such subtrees can then be combined in order to obtain the full protein tree $P$. One way to combine the orthologous protein subtrees consists in following an approach, successfully used in \cite{lafond2016} for combining a set of gene subtrees into a gene tree. It consists in jointly reconstructing the combined protein tree $P$ and the gene tree $G$ while seeking to minimize the double reconciliation cost of $G$ with $P$ and $S$. We then define a second problem that consists in finding an optimal pair of protein tree $P$ and gene tree $G$, given a species tree $S$ and a set of known subtrees $P_{i, 1 \leq i \leq  k}$ of $P$.

\noindent \textsc{Minimum Double Reconciliation Protein and Gene Tree Problem (MinDRPGT$_{XY}$):}\\
\noindent {\bf Input:} A species tree $S$ for $\species$; 
 a set of proteins $\proteins$, a set of subsets $\proteins_{i, 1 \leq i \leq  k}$ of $\proteins$ such that $\bigcup_{i=1}^{k}{\proteins_i} = \proteins$,
 and a set of protein trees  $P_{i, 1 \leq i \leq  k}$ such that for each $i,1 \leq i \leq  k$, $P_i$ is a tree for $\proteins_i$ and a subtree of the target (real) protein tree.\\
\noindent {\bf Output:} A protein tree $P$ for $\proteins$ such that $\forall i$, $P_{|\proteins_i} = P_i$ and a gene tree $G$ for $\genes = \{g(x) : x \in \proteins \}$ that minimize the double reconciliation cost $XY(P,G,S)$.\\

\section{NP-hardness of \textsc{MinDRGT}}
\label{sec:MinDRGT_hardness}

In this section, we prove the NP-hardness of \textsc{MinDRGT$_{XY}$} for $X = C$ and $Y \in \{D, L, M\}$.
\begin{proposition}
\label{prop:equivalence}
Given a protein tree $P$ on $\proteins$ and a gene tree $G$ on $\genes$ with a protein-species mapping $g$, let $G'$ be a gene tree on $\genes'=\proteins$, and $S'$ a species tree on $\species'=\genes$ with a gene-species mapping $s=g$. The reconciliation costs from $P$ to $G$, and from $G'$ to $S'$
satisfy the following:
(1) $C(P,G)=D(G',S')$ ; (2) $L(P,G)=L(G',S')$ ; (3) $M(P,G)=M(G',S')$.
\end{proposition}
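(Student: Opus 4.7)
The plan is to show that after the relabeling indicated in the statement, the protein-gene reconciliation of $P$ with $G$ and the gene-species reconciliation of $G'$ with $S'$ are \emph{syntactically the same} object, up to renaming the creation label to the duplication label. Concretely, $G'$ is $P$ viewed as a gene tree, $S'$ is $G$ viewed as a species tree, and the gene-species mapping $s$ of $G'$ onto $S'$ on leaves is $g$ by assumption. The first verification would be that the recursive extension of $s$ to internal nodes of $G'$ coincides with the recursive extension of $g$ to internal nodes of $P$: both are defined as the lca in the same ambient tree ($G = S'$) of the same set of leaf images, so they agree on every internal node.

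Next I would compare the label functions. For $G'$, $l_{G'}(x) = Dup$ iff $s(x)=s(x_l)$ or $s(x)=s(x_r)$, and $l_{G'}(x) = Spec$ otherwise. For $P$, $l_P(x) = Creat$ iff $g(x)=g(x_l)$ or $g(x)=g(x_r)$, and otherwise $l_P(x)\in\{Spec,Dup\}$ according to $l_G(g(x))$. Substituting $s=g$ and $G=S'$ shows that $\{x : l_P(x)=Creat\} = \{x : l_{G'}(x)=Dup\}$ as subsets of $\vertices(P)=\vertices(G')$. This gives claim~(1) immediately: $C(P,G) = |\{x : l_P(x)=Creat\}| = |\{x : l_{G'}(x)=Dup\}| = D(G',S')$. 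The extra refinement into $Spec$ vs.\ $Dup$ that $l_P$ carries on non-creation nodes is harmless for this count, since $D(G',S')$ only enumerates duplications.

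For claim~(2), I would observe that the definition of loss events on an edge $(x,y)$ is already stated in identical form in both reconciliations: the internal nodes on the path between the images of $x$ and $y$ in the target tree, plus one extra loss at the image of $x$ when $x$ is a duplication/creation node with $g(x)\neq g(y)$ (respectively $s(x)\neq s(y)$). Under the identification of mappings and trees above, these two paths are the same path, and by~(1) the condition triggering the additional loss is the same condition. So each edge of $P=G'$ produces the same multiset of losses under both reconciliations, and hence $L(P,G)=L(G',S')$. Finally, (3) follows by adding (1) and (2): $M(P,G)=C(P,G)+L(P,G)=D(G',S')+L(G',S')=M(G',S')$.

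The proof is essentially a careful bookkeeping exercise rather than one with a serious obstacle; the only subtle point is to notice that the three-valued labeling $l_P$ collapses to the two-valued labeling $l_{G'}$ exactly as $\{Spec,Dup\}\mapsto Spec$ and $Creat\mapsto Dup$, and to confirm that this collapse preserves both the nodes counted by the creation/duplication cost and the triggers for the additional loss on an edge.
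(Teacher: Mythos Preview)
Your proof is correct. The paper states Proposition~\ref{prop:equivalence} without proof, treating it as immediate from the definitions, so your argument---identifying $G'$ with $P$, $S'$ with $G$, and $s$ with $g$, then verifying that the extended mappings, the label sets $\{x:l_P(x)=Creat\}=\{x:l_{G'}(x)=Dup\}$, and the edge-wise loss counts all coincide---is precisely the implicit verification the paper is taking for granted.
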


From Proposition \ref{prop:equivalence}, all algorithmic results for the reconciliation
problems between gene and species trees can be directly transferred to the equivalent
reconciliation problems between protein and gene trees. 
In particular, in~\cite{ma2000}, it is shown that, given a gene tree $G$, finding a species tree $S$ minimizing $D(G, S)$ is NP-hard.
To our knowledge the complexity for the same problem with the $L(G, S)$ or $M(G, S)$ costs are still open,
though we believe it is also NP-hard since they do not seem easier to handle than the duplication cost.  
Theorem \ref{thm:NP-hardness} uses these results to
imply the NP-hardness of some versions of \textsc{MinDRGT}.

\begin{theorem}
\label{thm:NP-hardness}
Suppose that the problem of finding a species tree $S$ minimizing the cost $Y'(G, S)$ with a given gene tree $G$ 
is NP-hard for $Y' \in \{D, L, M\}$.  
Let $X = C$ if $Y' = D$, and $X = Y'$ if $Y' \in \{L, M\}$.

Then for any reconciliation cost function $Y \in \{D, L, M\}$ and a given protein tree $P$ and  
species tree $S$, 
the problem of finding a gene tree $G$ minimizing the double-reconciliation cost 
$XY(P, G, S)$ is NP-hard, even if $|\genes| = |\species|$ (Proof given in Appendix).
\end{theorem}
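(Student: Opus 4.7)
The plan is to reduce the hypothesised NP-hard species-tree problem --- given a gene tree $G_0$ with species mapping $s_0:\genes_0\to\species_0$, find a species tree $S^\ast$ minimising $Y'(G_0,S^\ast)$ --- to \textsc{MinDRGT}$_{XY}$, using Proposition \ref{prop:equivalence} to swap the roles of the (gene, species) and (protein, gene) layers.

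From an NP-hard instance $(G_0,s_0)$, the first step is to build a base \textsc{MinDRGT} input: set $\proteins:=\genes_0$, $\genes:=\species_0$, $g:=s_0$, and let $P$ be $G_0$ reinterpreted as a protein tree on $\proteins$. Proposition \ref{prop:equivalence} then yields $X(P,G)=Y'(G_0,G)$ for every candidate gene tree $G$ on $\genes$ viewed as a species tree on $\species_0$, so minimising $X(P,G)$ alone is exactly the NP-hard problem. To meet the restriction $|\genes|=|\species|$ in the statement, I would take $\species:=\genes$ with identity $s$, which forces $s$ to be a bijection; the unavoidable price is that $Y(G,S)$ is nonzero whenever $G$ is not isomorphic to $S$, and therefore cannot be held constant across candidate gene trees.

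The key remaining step is to prevent the additive term $Y(G,S)$ from perturbing the optimiser. I would use a padding construction: replace $P$ with a larger protein tree $\widetilde P$ built from $K$ internally disjoint copies of $G_0$ on fresh protein labels that still map via $g$ onto the same genes $\genes$, stitched together along a spine of apparent creation nodes. A direct computation using Proposition \ref{prop:equivalence} copy by copy gives $X(\widetilde P,G) = K\cdot Y'(G_0,G) + c(K)$, for a term $c(K)$ that does not depend on $G$. Choosing $K$ strictly larger than the maximum possible value of $Y(G,S)$, which is bounded by a polynomial in $|\species|$ for each of the $D$, $L$ and $M$ cost regimes, then forces $X(\widetilde P,G)+Y(G,S)$ and $Y'(G_0,G)$ to share a minimiser; recovering an optimal $G$ in the \textsc{MinDRGT} instance therefore solves the NP-hard instance. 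The species tree $S$ can be taken to be any fixed binary tree on $\species$, for example a caterpillar.

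The main obstacle is carrying the amplification through cleanly when $X\in\{L,M\}$: the creation cost decomposes immediately across the spine, but the loss cost along spine edges depends on how deep each copy of $G_0$ sits in $\widetilde P$, so I would need to tune the spine structure so that the $G$-independent overhead really is the same for all candidate gene trees. Once this bookkeeping is in place, the argument is uniform over all choices of $Y\in\{D,L,M\}$ and over the three admissible pairs $(X,Y')$, yielding the stated NP-hardness even under $|\genes|=|\species|$.
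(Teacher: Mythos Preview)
Your reduction is sound and is the natural route suggested by the paper's own setup: use Proposition~\ref{prop:equivalence} to recast the NP-hard ``find a species tree'' instance $(G_0,s_0)$ as the protein--gene layer, then amplify with $K$ disjoint copies of $G_0$ so that the $X(\widetilde P,G)$ term dominates any possible contribution of $Y(G,S)$. The paper defers its proof to an appendix not included in the provided source, but given that Proposition~\ref{prop:equivalence} is explicitly positioned as the bridge and that the statement insists on $|\genes|=|\species|$ (which rules out the trivial ``single species'' trick for neutralising $Y(G,S)$), your padding argument is essentially forced and almost certainly matches the intended one.

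One remark: the obstacle you flag for $X\in\{L,M\}$ is not actually an obstacle. Every spine node $w_j$ and every copy root $r_i$ satisfies $g(w_j)=g(r_i)=\mathrm{root}(G)$, since each copy contains proteins mapping onto \emph{all} of $\genes$. Hence on every spine edge the two endpoints have the same $g$-image, the path between them in $G$ is empty, and the extra ``$l_P(x)=Creat$ with $g(x)\neq g(y)$'' loss is never triggered. The spine therefore contributes exactly $0$ losses and exactly $K-1$ creations, independently of $G$, so
\[
X(\widetilde P,G)=K\cdot Y'(G_0,G)+c(K),\qquad c(K)\in\{0,\,K-1\},
\]
with no tuning required. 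Taking $K$ larger than the polynomial upper bound on $Y(G,S)$ over all binary $G$ on $\genes$ (at most $|\genes|-1$ for $Y=D$, and $O(|\genes|^2)$ for $Y\in\{L,M\}$) then makes the decision versions equivalent, completing the reduction.
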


\begin{corollary}
The \textsc{MinDRGT$_{XY}$} problem is NP-hard for $X = C$ and $Y \in \{D, L, M\}$.
\end{corollary}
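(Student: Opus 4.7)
The plan is to reduce from the NP-hard problem of finding a species tree that minimizes $Y'(G_0, \cdot)$ for a given gene tree $G_0$ equipped with a leaf mapping $\sigma \colon \mathcal{L}(G_0) \to \species'$. Proposition~\ref{prop:equivalence} serves as the translation device: treating a protein tree $P$ as a gene tree and a candidate gene tree $G$ as a candidate species tree, the reconciliation cost $X(P,G)$ equals $Y'(G_0, G)$ whenever $X$ and $Y'$ are matched as in the theorem statement ($C \leftrightarrow D$, $L \leftrightarrow L$, $M \leftrightarrow M$). Were it not for the additional term $Y(G,S)$ in the double reconciliation, simply taking $P = G_0$ with $g = \sigma$ would suffice.

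The main difficulty is neutralizing this extra term when the theorem forces $|\genes| = |\species|$, so that the gene-to-species mapping $s$ must be a bijection and $Y(G,S)$ genuinely depends on $G$ for any fixed $S$. I would handle this by amplification. Given $(G_0, \sigma)$ with $n = |\species'|$, set $\genes = \species'$, set $\species = \genes$ with $s$ the identity, and fix an arbitrary species tree $S$ on $\species$. Build $P$ by placing $K$ disjoint copies of $G_0$ under an arbitrary binary joining tree, each copy endowed with its own proteins mapped to $\genes$ via $\sigma$. Choose $K$ as a polynomial in $n$ strictly larger than $\max_G Y(G, S)$, which is itself polynomial in $n$ since every reconciliation cost is bounded by $O(n^2)$.

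The key structural observation is that every internal node of the joining tree has gene image equal to the root $r_G$ of $G$, as do both of its children, so each such node is an apparent creation labeled \emph{Creat} that induces no loss on its adjacent edges. Combined with Proposition~\ref{prop:equivalence} applied within each copy, this yields $X(P, G) = K \cdot Y'(G_0, G) + c_X$ for a constant $c_X$ independent of $G$. Consequently $XY(P, G, S) = K \cdot Y'(G_0, G) + Y(G, S) + c_X$, and the choice of $K$ guarantees that any non-minimizer of $Y'(G_0, \cdot)$ yields a strictly larger total cost than any minimizer. Hence recovering an optimum of \textsc{MinDRGT}$_{XY}$ on the constructed instance recovers an optimal species tree for $G_0$, and the reduction is polynomial with $|\genes| = |\species|$ by construction.

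The main technical obstacle is verifying the constant-contribution claim for the joining structure across all three cost variants: one must check carefully that each joining node is labeled \emph{Creat} (not \emph{Spec} or \emph{Dup}) and that the edges incident to joining nodes contribute no induced losses, which relies on the fact that all their gene images collapse to $r_G$ so that the path in $G$ between endpoints has length zero and no extra loss is charged on \emph{Creat} edges with matching gene images.
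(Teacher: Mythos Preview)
Your argument is correct, but it does more than the corollary asks: you essentially prove Theorem~\ref{thm:NP-hardness}. In the paper, the corollary follows in one line from Theorem~\ref{thm:NP-hardness} by setting $Y' = D$ and invoking the known NP-hardness of species-tree inference under the duplication cost~\cite{ma2000}; no new construction is needed at this point.

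The amplification reduction you give---taking $K$ disjoint copies of $G_0$ under a binary joining tree with $K$ polynomial in $n$ and strictly exceeding $\max_G Y(G,S)$, then arguing that every joining node and both of its children map to the root of $G$ so that each joining node is labeled \emph{Creat} and contributes zero induced loss---is a sound proof of Theorem~\ref{thm:NP-hardness} itself. The paper defers that theorem's proof to the appendix, so the main text does not display its construction and a line-by-line comparison is not possible here; your route is natural, and the delicate verification you flag (that the joining skeleton's contribution $c_X$ is independent of $G$ across all three cost variants) is handled correctly.
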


\section{\textsc{MinDRGT} for the case $\proteins  \Leftrightarrow \genes $}
\label{sec:MinDRGT_P=G}

In Section \ref{sec:MinDRGT_hardness}, we have proved the NP-hardness
of several versions of the \textsc{MinDRGT} problem. In this section, we consider the problem
in a special case where $\proteins \Leftrightarrow \genes$, i.e each gene is the 
image of a single protein by the mapping function $g$. In the remaining of the section, we assume that $\proteins  \Leftrightarrow \genes $ without explicitly
mentioning it. We first study the
subcase where $\proteins \Leftrightarrow \genes \Leftrightarrow \species$, i.e each species contains a single gene of the family. Next, we study the
case where $\proteins \Leftrightarrow \genes$ and develop a heuristic method for it. In the sequel, 
given a protein tree $P$ on $\proteins$, $g(P)$ denotes 
the gene tree for $\genes$ obtained from $P$ by replacing each leaf protein $x\in \proteins$ by the gene $g(x)$ 
(see Figure \ref{fig:cexample3} for example).
Notice that for any cost function $X$, $X(P, g(P)) = 0$.

\subsection{Case where $\proteins \Leftrightarrow \genes \Leftrightarrow \species$.}
\label{sec:MinDRGT_P=G=S}
In this section, we consider the additional restriction that $\genes \Leftrightarrow \species$. For a gene tree $G$ on $\genes$, $s(G)$ denotes the species tree for $\species$ obtained from $G$ by replacing each leaf gene $x\in \genes$ by the species $s(x)$.

One question of interest is whether $g(P)$ is always a solution for \textsc{MinDRGT}$_{XY}$.
In other words, is it the case that for any gene tree $G'$, 
$Y(g(P), S) \leq X(P, G') + Y(G', S)$?  When $X = C$ and $Y = D$, this is true if and only if the  duplication cost satisfies the triangle inequality.  
In~\cite{ma2000}, the authors believed that the duplication cost did have this property, 
but as we show in Figure~\ref{fig:cexample3}, this is not always the case.
In fact, $g(P)$ cannot be assumed to be optimal also for the case $X = Y \in \{L, M\}$.
Thus we get the following remark.

\begin{figure}[!ht]
      \centering
      \includegraphics[width = 0.47\textwidth]{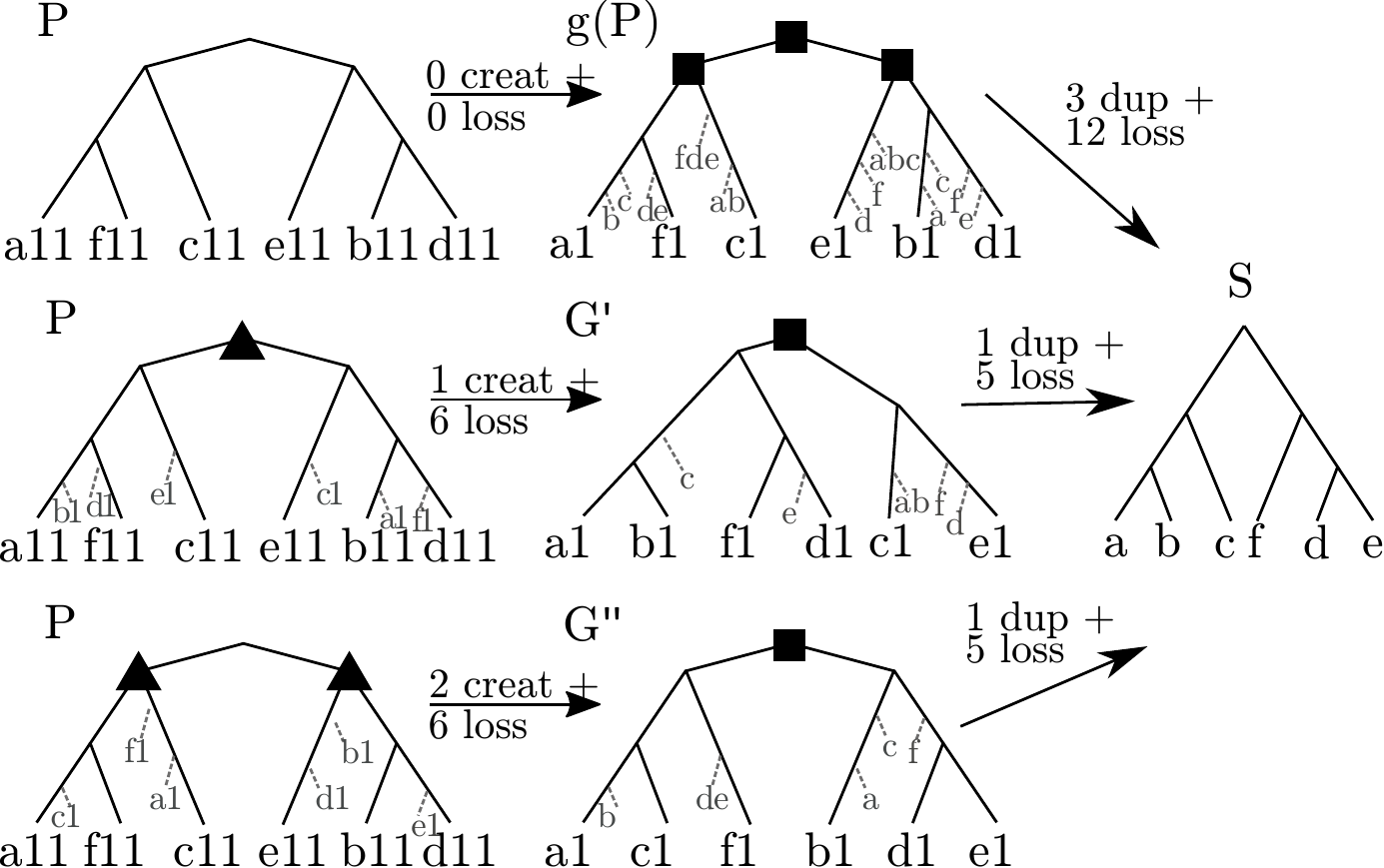}
      \caption{Example of protein tree $P$ on $\proteins = \{a_{11},b_{11},c_{11},d_{11}, e_{11}, f_{11}\}$, species tree $S$ on $\species =\{a,b,c,d,e,f\}$ and gene family $\genes = \{a_1,b_1,c_1,d_1,e_1, f_1\}$, with $g(x_{ij}) = x_i$ and $s(x_i) = x$ for any protein $x_{ij} \in \proteins$, gene $x_i \in \genes$ and species $x \in \species$. The gene tree $G'$ on $\genes$ induces a cost that is strictly lower than the cost induced by the gene tree $g(P)$, for the following double reconciliation costs : CD, CL, CM, LL, LM, MM.
The gene tree $G''$ is the tree that Algorithm 1 would output.}
      \label{fig:cexample3}
    \end{figure}

\begin{remark}Figure \ref{fig:cexample3} illustrates that under the restriction that $\proteins \Leftrightarrow \genes \Leftrightarrow \species$,  in particular the duplication cost, lost cost and the mutation cost do not satisfy the triangle inequality, i.e., there may exists a gene tree $G'$ on $\genes$ such that $D(g(P), S) > C(P, G') + D(G',S)$, $L(g(P),S) > L(g(P),s(G')) + L(G',S)$ and $M(g(P),S) > M(g(P),s(G')) + M(G',S)$. Moreover,
the gene tree $g(P)$ is not a solution for any of \textsc{MinDRGT}$_{CD}$, \textsc{MinDRGT}$_{LL}$, \textsc{MinDRGT}$_{CL}$, \textsc{MinDRGT}$_{CM}$, \textsc{MinDRGT}$_{LL}$, \textsc{MinDRGT}$_{LM}$, and  \textsc{MinDRGT}$_{MM}$.
\end{remark}

\subsection{Case where $\proteins \Leftrightarrow \genes$.}
\label{sec:Heuristic}

In the following, we present a heuristic method
for the \textsc{MinDRGT$_{XY}$} problem, under the restriction 
that $|\proteins | = |\genes|$. The intuition behind the algorithm is based on the idea that we seek for a gene tree
$G_{opt}$ on $\genes$ that decreases the reconciliation cost with $S$, while slightly 
increasing the reconciliation cost with $P$, 
in order to globally decreases the double reconciliation cost with $P$ and $S$.
Let $G=g(P)$. The method consists in building $G_{opt}$ from $G$ by slightly modifying subtrees of 
$G$ that are incongruent with $S$, in order 
to decrease the reconciliation cost with $S$.
The heuristic method makes the following choices:\\
\noindent C1) the subtrees of 
$G$ incongruent with $S$ are those rooted at duplication nodes $x$ with
$l_G(x)\neq l_G(x_l)$ or $l_G(x)\neq l_G(x_r)$. \\
\noindent C2) If $l_G(x) = Dup$ and $l_G(x)\neq l_G(x_l)$ w.l.o.g, we denote by $Mix(G[x])$
the set of trees $G'$ on $\mathcal{L}(G[x])$ that can be obtained by
grafting $G[x_l]$ onto an edge of $G[x_r]$
on which $s(x_l)$ is lost. 
Then, the slight modification applied on $G[x]$ consists in replacing $G[x]$ by a tree $G' \in Mix(G[x])$ that decreases the double
reconciliation cost with $P$ and $S$ by at least $1$.

\noindent {\bf Algorithm 1: Heuristic for \textsc{MinDRGT$_{XY}$}}\\
\noindent Input : Tree $P$ on $\proteins$, Tree $S$ on $\species$, gene set $\genes$, mappings $g,s$.\\
\noindent Output : Tree $G_{opt}$ on $\genes$ such that $G_{opt} = g(P)$ or $XY(P,G_{opt},S) < XY(P,g(P),S) $\\
\noindent 1) $G$ $\leftarrow$ $g(P)$\\
\noindent 2) Compute $l_G$ and let $\baddups = \{x\in \mathcal{V}(G) ~|~ l_G(x)=Dup ~and~ l_G(x)\neq l_G(x_l)~or~ l_G(x)\neq l_G(x_r)\}$\\
\noindent 3) For any node $x\in \baddups$:

a) $u$ $\leftarrow$ the single node $u$ of $P$ s.t.  $g(u) = x$ ;

b) $v$ $\leftarrow$ the single node $v$ of $S$ s.t. $v = s(x)$ ;

c) $G_{opt}[x] \leftarrow argmax_{G' \in Mix(G[x])}{ XY(P[u],G',S[v])}$

d) $\delta(x) \leftarrow Y(G[x],S[v]) - XY(P[u],G_{opt}[x],S[v])$

\noindent 4) Find a subset $\baddups'$  of $\baddups$ s.t. $\forall x \in \baddups'$, $\delta(x) > 0$, and $\forall (x,y) \in \baddups'^2$, $lca(x,y) \neq x$ and $lca(x,y) \neq y$, and $\sum_{x \in \baddups'}{\delta(x)}$  is  maximized.\\
\noindent 5) Build $G_{opt}$ from $G$ by replacing any
subtree $G[x], x\in \baddups'$ by $G_{opt}[x]$.

\noindent {\bf Complexity:} For Step 4 of Algorithm 1, we use a linear-time heuristic greedy algorithm. The time complexity of Algorithm 1 is in $O(n^2)$ where $n= |\genes|$, since $|\mathcal{V}(G)| = O(n)$, $|\baddups| = O(n)$ and $|Mix(G[x])| = O(n)$ for any $x\in \baddups$, and Steps 4
and 5 are realized in linear-time.  

For example, the application of Algorithm 1 on the example
of protein tree $P$ and species tree $S$ depicted in Figure \ref{fig:cexample3} would allow to reconstruct the gene tree $G''$ obtained by moving the subtree of $g(P)$ containing gene $c_1$ onto the branch leading to gene $a_1$, and moving the subtree containing gene $e_1$ onto the branch leading to gene $d_1$. However, the resulting gene tree $G''$ is not as optimal as the gene tree $G'$. Algorithm 1 can be extended in order to allow computing the more optimal gene tree $G'$ by modifying the choices C1 and C2 made by the algorithm: for example, in Step 2 set $\baddups = \{x\in V(G) ~|~ l_G(x)=Dup\}$, and in Step 3.c consider $Mix(G[x]) = \{G' ~|~ G'_{|\mathcal{L}(G[x_l])} = G[x_l] ~ and ~G_{|\mathcal{L}(G[x_r])} = G[x_r]\}$. The resulting algorithm would be an exponential time algorithm because of the exponential size of the sets $Mix(G[x])$.

\vspace{-2mm}

\section{\textsc{MinDRPGT} for maximum creation-free protein subtrees}
\label{sec:MinDRPGT}

\vspace{-2mm}

In this section, we consider the \textsc{MinDRPGT} problem in a special case
where the input subtrees $P_{i,1\leq i \leq k}$ are all the inclusion-wise maximum creation-free protein subtrees of the real protein tree, and we develop an exact algorithm for solving the problem. 

For example, the inclusion-wise maximum creation-free protein subtrees of the labeled protein tree
$P$ depicted in Figure \ref{proteintree} are the subtrees $P_1,P_2,P_3$ of $P$ induced by the subsets of proteins $\proteins_1=\{b_{01},c_{11},a_{31},b_{31},c_{21},c_{31},d_{31}\}$, 
$\proteins_2=\{b_{02},c_{11},a_{31},b_{31},c_{21},c_{31},d_{31}\}$,
and 
$\proteins_3=\{b_{11},a_{21},b_{21},c_{12},a_{31},b_{31},c_{21},c_{31},d_{31}\}$.

Let $P$ be a protein tree for $\proteins$ with a LCA-reconciliation $l_P$,
and $\mathbb{P} = \{P_1, P_2,\ldots,P_k\}$ the set of all the inclusion-wise maximum creation-free protein subtrees of $P$. We define the function $span$ from the
set of protein $\proteins$ to the set $2^{\mathbb{P}}$ of subsets of $\mathbb{P}$ such that,
for any $x\in \proteins$, $span(x)$ is the subset of $\mathbb{P}$ such that $x$ is 
a leaf of any tree in $span(x)$, and $x$ is not a leaf of any tree in $\mathbb{P} - span(x)$. For example, for Figure \ref{proteintree}, $span(b_{01}) = \{P_1\}$, $span(c_{11}) = \{P_1,P_2\}$, $span(b_{11}) = \{P_3\}$, $span(a_{31}) = \{P_1,P_2,P_3\}$.

We define the \emph{span partition} of $\proteins$ according to $\mathbb{P}$ as the partition $\mathbb{P}_{span} = \{S_1, S_2,\ldots,S_m\}$ of $\proteins$
such that for any set $S_u \in \mathbb{P}_{span}$, for any pair of proteins $x,y$ in $S_u$, $span(x)=span(y)$. Note that $\mathbb{P}_{span}$ is unique.  The function $span$ is extended to be defined from $\proteins \cup \mathbb{P}_{span}$ to $2^{\mathbb{P}}$ such that for $S_u \in \mathbb{P}_{span}$, $span(S_u) = span(x)$ for any $x\in S_u$.

For example, for Figure \ref{proteintree},
$\mathbb{P}_{span} = \{S_1 = \{b_{01}\},$ $S_2 = \{b_{02}\},$ $ S_3 = \{b_{11},a_{21},b_{21},c_{12}\},$ 
$S_4 = \{c_{11}\},$ $S_5 = \{a_{31},b_{31},c_{21},c_{31},d_{31}\}\}$: $span(S_1) = \{P_1\}$, $span(S_2) = \{P_2\}$, $span(S_3) = \{P_3\}$, $span(S_4) = \{P_1,P_2\}$  and $span(S_5) = \{P_1,P_2,P_3\}$.

\begin{lemma}
\label{lem:complete-subtrees}
Let $\mathbb{P}$ be the set all inclusion-wise maximum creation-free protein subtrees of a labeled protein tree $P$. Let 
$\mathbb{P}_{span}$ be the span partition of $\proteins$  according to $\mathbb{P}$. 

If $P$ contains at least one protein creation node, then there exist at least a pair of distinct 
sets $S_u,S_v$ in $\mathbb{P}_{span}$ such that
the subtrees of $P$ induced by $S_u$ and $S_v$,  $P_{|S_u}$ and $P_{|S_v}$,
are complete subtrees of $P$, and the subtree of $P$ induced by $S_u \cup S_v$ is also a complete subtree of $P$. In this case:\\
(1) $l_P(lca_P(S_u \cup S_v)) = Creat$ ;\\
(2) For any $t\in \{u,v\}$ and any $P_i \in span(S_t)$, $P_{|S_t} = P_{i|S_t}$ ; \\
(3) $span(S_u) \cap span(S_v) = \emptyset$ ; \\
(4) the following two sets of subtrees are equal: $\{P_{i|\leafset(P_i) - S_u} ~|~ P_i \in span(S_u)\}$ = $\{P_{i|\leafset(P_i) - S_v} ~|~ P_i \in span(S_v)\}$.
\end{lemma}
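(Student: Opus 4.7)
The plan is to take $c$ to be a lowest protein creation node in $P$, i.e., a creation node whose subtree $P[c]$ contains no other creation node, and to set $S_u = \mathcal{L}(P[c_l])$ and $S_v = \mathcal{L}(P[c_r])$. Such a $c$ exists because $P$ is finite and contains at least one creation node by hypothesis, so we may descend the creation-ancestor chain until no creation descendant remains. The three ``complete subtree'' requirements are then immediate: $P|_{S_u} = P[c_l]$, $P|_{S_v} = P[c_r]$, and $P|_{S_u \cup S_v} = P[c]$, since restricting a binary tree to all leaves of a subtree suppresses no degree-two internal nodes.

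The crux is to verify that $S_u$ and $S_v$ are indeed classes of $\mathbb{P}_{span}$. I would first establish a structural characterization of the inclusion-wise maximum creation-free subtrees of $P$: a subset $L' \subseteq \proteins$ yields a creation-free $P|_{L'}$ exactly when, at every creation node $c'$ of $P$, $L'$ contains leaves from at most one of $\mathcal{L}(P[c'_l])$ and $\mathcal{L}(P[c'_r])$; and an inclusion-wise maximum such $L'$ is obtained by assigning a side $L$ or $R$ to each creation node and collecting every leaf $\ell$ whose root-to-leaf path is consistent with the chosen side at every creation ancestor of $\ell$. Consequently, two leaves lie in the same span class iff they share the same set of creation ancestors and the same side at each of them. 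Because $c$ is a lowest creation node, every $\ell \in \mathcal{L}(P[c_l])$ has the same creation-ancestor profile, namely $c$ (on the left) together with the creation ancestors of $c$, and this profile differs from that of any leaf outside $P[c_l]$ (which either misses $c$ as a creation ancestor, or lies on the right at $c$). Hence $S_u$ is exactly one element of $\mathbb{P}_{span}$, and symmetrically for $S_v$.

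The four numbered properties then follow by short arguments. Property (1) is immediate from $lca_P(S_u \cup S_v) = c$ together with the fact that $c$ is a creation node by construction. Property (2) uses the identity $P_i|_{S_t} = (P|_{\mathcal{L}(P_i)})|_{S_t} = P|_{S_t}$, which is valid whenever $S_t \subseteq \mathcal{L}(P_i)$. Property (3) holds because membership in $span(S_u)$ forces the underlying choice function to assign $L$ at $c$, while membership in $span(S_v)$ forces $R$. For property (4), once the value at $c$ is fixed, the portion of $\mathcal{L}(P_i)$ lying outside $P[c]$ is determined purely by the choices at creation nodes other than $c$, via a mapping $\Phi$ that does not depend on the value at $c$; as $P_i$ ranges over $span(S_u)$ (respectively, over $span(S_v)$), the set $\{P_i|_{\mathcal{L}(P_i) - S_t}\}$ enumerates exactly $\{P|_{\Phi(g)}\}$ as $g$ varies over all choice functions on the remaining creation nodes, giving the same set in both cases.

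The most delicate step I anticipate is the characterization of span classes via creation-ancestor profiles, since distinct choice functions can produce the same maximum creation-free subtree (when a choice at some $c'$ renders the creation descendants of $c'$ irrelevant). This does not affect the final equalities, which compare image sets rather than indexing functions, but it does require care in setting up the correspondence before the four properties can be read off.
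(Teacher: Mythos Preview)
Your approach is exactly the one the paper takes: pick a lowest creation node $w$ (your $c$) and set $S_u=\leafset(P[w_l])$, $S_v=\leafset(P[w_r])$. The paper's proof is in fact a two-sentence sketch that stops right after identifying $S_u$ and $S_v$; your proposal supplies the verification (via the creation-ancestor profile characterization of span classes) that the paper leaves implicit, and your arguments for properties~(1)--(4) are correct.
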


\begin{proof}
There must exist a node $w$ in $P$ such that $l_P(w) = Creat$
and no node $x\neq w$ in $P[w]$ satisfies
$l_P(x) = Creat$. Then, $S_u = \leafset(P[w_l])$
 and $S_v = \leafset(P[w_r])$.
 \end{proof}

For any set $S_t \in \mathbb{P}_{span}$, $tree(S_t)$ 
denotes the (possibly partial) subtree of $P$
such that, for any $P_i \in span(S_t)$, $tree(S_t) = P_{i|S_t}$.

\noindent {\bf Algorithm 2: for reconstructing $P$ from the set $\mathbb{P}$}\\
\noindent Input : Set $\mathbb{P}$ of all inclusion-wise maximum creation-free protein subtrees of a labeled protein tree $P$ on 
$\proteins$.\\
\noindent Output : Protein tree $P$.\\
\noindent 1) Compute the span partition, $Q \leftarrow \mathbb{P}_{span} = \{S_1, \ldots,S_m\}$ ;  \\
\noindent 2) Set $\{tree(S_u) ~|~ S_u \in \mathbb{P}_{span}\}$ as  subtrees of $P$ ;\\
\noindent 3) While $|Q| > 1$:\\
a) If there exist two distinct sets $S_u,S_v$ in $Q$ such that $span(S_u) \cap span(S_v) = \emptyset$ and $\{P_{i|\leafset(P_i) - S_u} ~|~ P_i \in span(S_u)\}$ = $\{P_{i|\leafset(P_i) - S_v} ~|~ P_i \in span(S_v)\}$, then:

i) Add a node $w$ in $P$ such that $tree(S_u)$, $tree(S_v)$ become the
left and right subtrees of $w$, resulting in a subtree $P'$;

ii) Set $l_P(w) \leftarrow Creat$ and for any $S_t \in Q$, $span(S_t) \leftarrow span(S_t) - span(S_v)$ ;

\noindent b) Otherwise, 

i) Find two distinct sets $S_u,S_v$ in $Q$ such that $P_{|S_u}$ was built at a previous iteration of Step 3;

ii) Graft $P_{|S_u}$ onto $P_{|S_v}$ as the sibling of the node of $P_{|S_v}$ such that the resulting tree $P'$ on $S_u\cup S_v$ is compatible with all subtrees $P_i \in span(S_v)$, i.e. $P'_{|S_t} = P_{i|S_t}$ with $S_t = (S_u\cup S_v)\cap\leafset(P_i)$ ;

\noindent c) Set $S_w \leftarrow S_u\cup S_v$ and $Q \leftarrow Q - \{S_u,S_v\} \cup \{S_w\}$ with 
$span(S_w) \leftarrow span(S_u)$ ;

\begin{theorem}
\label{thm:algo2}
Given the set $\mathbb{P}$ of all inclusion-wise maximum creation-free protein subtrees of a labeled protein tree $P$ 
on $\proteins$, Algorithm 2 reconstruct $P$ and its time complexity is in $O(n^3)$ (Proof given in Appendix).
\end{theorem}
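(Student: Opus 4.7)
The plan is to prove correctness by induction on the number of protein creation nodes in $P$, showing that at every iteration of the while loop either Step 3a or Step 3b produces a subtree that is a genuine subtree of $P$, and then to bound the running time. For the base case, if $P$ contains no Creat node, then $\mathbb{P} = \{P\}$, every protein shares the same span, $|\mathbb{P}_{span}| = 1$, the while loop is not entered, and $tree(\mathcal{L}(P)) = P$ is returned directly.

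For the inductive step, I would first establish that whenever $Q$ still contains a pair corresponding to the two children of a Creat node of $P$ that has been completely reassembled below, Step 3a fires. This is a direct consequence of Lemma~\ref{lem:complete-subtrees}: the pair $S_u, S_v$ identified there has disjoint spans and identical residual subtree multisets, so joining $tree(S_u)$ and $tree(S_v)$ beneath a new Creat node faithfully reproduces the corresponding node of $P$. The update of the spans in Step 3a.ii mirrors the fact that, once $S_u$ and $S_v$ have been combined, the merged pair is a single creation-free block of the remaining tree, with span equal to $span(S_u)$, so future iterations see the reduced problem corresponding to $P$ with that Creat subtree collapsed to a leaf.

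For Step 3b, I would argue that when Step 3a's condition fails but $|Q| > 1$, there must exist a non-Creat ancestor in $P$ whose two children are not both present as root-level subtrees in $Q$; instead, one of them, say $P_{|S_u}$, has already been assembled at a previous iteration (so it may carry internal Creat nodes) and now must be attached inside the skeleton $P_{|S_v}$. The compatibility requirement $P'_{|S_t} = P_{i|S_t}$ for every $P_i \in span(S_v)$ pins down a unique edge of $P_{|S_v}$ onto which $P_{|S_u}$ is grafted: the subtrees $P_i \in span(S_v)$ all agree on the common skeleton of $P_{|S_v}$ and collectively force the location of the missing sibling. Combining these two cases by induction yields that the algorithm terminates with $Q = \{\proteins\}$ and $tree(\proteins) = P$.

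For the complexity, computing $\mathbb{P}_{span}$ takes $O(n^2)$ by grouping proteins with identical span vectors. The while loop runs $O(n)$ times, since each iteration strictly decreases $|Q|$ by one. Each iteration performs an $O(n^2)$ scan over pairs in $Q$ to test the Step 3a condition (comparing spans and residual subtrees), and the grafting in Step 3b is linear in the size of $P_{|S_v}$; the total is $O(n) \cdot O(n^2) = O(n^3)$. I expect Step 3b to be the main obstacle: one must show that such a pair $(S_u, S_v)$ always exists when Step 3a fails and $|Q| > 1$, and that the grafting position is uniquely determined and simultaneously compatible with \emph{every} subtree $P_i \in span(S_v)$. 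This requires a careful analysis of how the span partition encodes the non-Creat ancestry structure within $P$, and of how partially reconstructed subtrees stored in $Q$ always admit a unique compatible extension.
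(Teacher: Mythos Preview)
The paper defers the proof of this theorem to an appendix that is not included in the provided source, so a direct comparison with the authors' argument is not possible. What can be said is that your outline follows the route the paper sets up: Lemma~\ref{lem:complete-subtrees} is stated and proved immediately before Algorithm~2 precisely to guarantee that Step~3a fires on the children of a deepest Creat node, and your use of it there is correct. Your identification of Step~3b as the crux is also on target.

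Two places where your sketch would need tightening before it becomes a proof. First, the loop invariant must be stronger than ``each $tree(S_t)$ is a genuine subtree of $P$'': you need that after the span update in Step~3a.ii, the current data $(Q,\,span,\,\{P_i\}_{i \in span(\cdot)})$ is exactly the data one would compute from the tree obtained from $P$ by contracting the just-assembled subtree $P[w]$ to a single leaf. Without this, the induction does not close, because subsequent iterations test conditions on the \emph{updated} spans, not the original ones, and Lemma~\ref{lem:complete-subtrees} is only stated for the original $\mathbb{P}_{span}$. Second, your complexity accounting assumes the Step~3a test is $O(1)$ per pair, but comparing the two residual-subtree sets is not constant-time, and more seriously you implicitly take $|\mathbb{P}|$ and the span vectors to be polynomial in $n=|\proteins|$. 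In fact $|\mathbb{P}|$ can be as large as $2^c$ when $c$ Creat nodes sit in parallel branches under non-Creat ancestors, so the $O(n^3)$ bound must be read with $n$ meaning the total input size rather than $|\proteins|$; this should be made explicit.
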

Applying Algorithm 2 on an instance of \textsc{MinDRPGT} such that the input subtrees $P_{i,1\leq i \leq k}$ are all the inclusion-wise maximum creation-free protein subtrees of the real protein tree $P$, allows to reconstruct $P$ with a partial labeling $l_P$ indicating all protein creation nodes. Then, \textsc{MinDRPGT} is
reduced to \textsc{MinDRGT} in the special case where a partial labeling of the input protein tree
is given.

\vspace{-3mm}

\section{Application}\label{sec:Appli}

\vspace{-2.5mm}
 We applied Algorithm 1 for the reconstruction of gene trees using protein 
 trees and gene families of the Ensembl database release 87 \cite{hubbard2002}. 
 Some of the trees were left unchanged by the algorithm.  We call an Ensembl gene tree $G$ `modified' if Algorithm 1, when given $G$, outputs a different tree.  Otherwise we say $G$ is `unmodified'.  The results are summarized in Table \ref{table:results}.
They show that initial gene trees, and particularly large size trees, are predominantly suboptimal in terms of double reconciliation cost. Moreover, modified and unmodified trees have comparable numbers of duplications, but modified trees have significantly higher number of losses, suggesting that gene trees with many losses are susceptible to correction.

\begin{table}[H]
\caption{Results of Algorithm 1 on $10861$ Ensembl gene trees. Samples: (A) $1 \leq n \leq 9$ (7500 trees), (B) $10 \leq n \leq 99$ (2688 trees),  (C) $100 \leq n \leq 199$ (673 trees), where  $n$ is the number of leaves in a tree with the number of trees in each sample in parenthesis 
(1) Number and percentage of modified trees, (2) Average number of duplications / losses in unmodified trees, (3) Average number of duplications / losses in modified trees (before modification), (4) Average value / percentage of double reconciliation cost reduction on modified trees (5) Average running time in ms.}
\centering
\begin{tabular}{|l|l|l|l|l|l|}
\hline
                        & (1) & (2) & (3) & (4) & (5) \\ \hline
(A)    &   \begin{tabular}[c]{@{}l@{}}111 \//\\ 1.48\%\end{tabular}   &  \begin{tabular}[c]{@{}l@{}}0.9\//\\ 7.8\end{tabular}  & \begin{tabular}[c]{@{}l@{}}1.65\//\\ 39.36\end{tabular}    & \begin{tabular}[c]{@{}l@{}}9.40\//\\  18.67\%\end{tabular}    &   0.77  \\ \hline
(B)  & \begin{tabular}[c]{@{}l@{}}1637\//\\ 60.90\%\end{tabular}    &  \begin{tabular}[c]{@{}l@{}}11.15\//\\ 40.67\end{tabular}   & \begin{tabular}[c]{@{}l@{}}11.43\//\\ 115.82\end{tabular}    & \begin{tabular}[c]{@{}l@{}}7.59\//\\ 6.01\%\end{tabular}     &   218.99  \\ \hline
(C) & \begin{tabular}[c]{@{}l@{}}651\//\\ 96.73\%\end{tabular}    & \begin{tabular}[c]{@{}l@{}}41.09\//\\ 167.40\end{tabular}  & \begin{tabular}[c]{@{}l@{}}31.08\//\\307.29\end{tabular}   & \begin{tabular}[c]{@{}l@{}}15.84\//\\ 4.35\%\end{tabular}    &   4724.70  \\ \hline
\end{tabular}
\label{table:results}
\end{table}

\section{Conclusion}
In this work, we have argued the importance of distinguishing 
gene trees from protein trees, and introduced the notion of protein trees into 
the framework of reconciliation.  
We have shown that, just as gene trees are thought of as evolving ``inside'' a species tree,
protein trees evolve ``inside'' a gene tree, leading to two layers of reconciliation.
We also provided evidence that, even if each gene in a given family encodes a single protein, 
the gene phylogeny does not have to be the same as the protein phylogeny, 
and may rather behave like a ``median'' between the protein tree and the species tree in terms of mutation cost.
It remains to evaluate the ability of the developed methods to infer more accurate gene trees on real datasets.

On the algorithmic side, many questions related to the double-reconciliation cost 
 deserve further investigation.  For instance, what is the complexity of finding an optimal gene tree 
in the case that $\proteins \Leftrightarrow \genes \Leftrightarrow \species$?  Also, given that the general 
\textsc{MinDRGT} problem is NP-hard, can the optimal gene tree $G$ be approximated
within some constant factor?  Or is the problem fixed-parameter tractable with respect to some interesting 
parameter, e.g. the number of apparent creations in the protein tree, or the maximum number of proteins per gene?
As for the \textsc{MinDRPGT} problem, it remains to explore how the partially labeled protein trees can be used to infer the gene tree.  Moreover, we have studied an ideal case where all maximum creation-free 
protein subtrees could be inferred perfectly.  Future work should consider relaxing this assumption
by allowing the input subtrees to have missing or superfluous leaves, or to contain errors.

\begin{spacing}{0.9}
\bibliographystyle{plain}
\bibliography{biblio}
\end{spacing}

\end{document}